\numberwithin{equation}{section}
\theoremstyle{definition}
\theoremstyle{plain}
\newtheorem{thm}{Theorem}[section]
\newtheorem{cor}{Corollary}[section]
\theoremstyle{definition}
\newcommand{\E}{\mathbb{E}}
\begin{document}
\title[Fractional short rate ]
{Equity warrant pricing under subdiffusive fractional Brownian motion of the short rate}

\date{\today}

\author[Shokrollahi]{Foad Shokrollahi}
\address{Department of Mathematics and Statistics, University of Vaasa, P.O. Box 700, FIN-65101 Vaasa, FINLAND}
\email{foad.shokrollahi@uva.fi}

\author[Magdziarz]{Marcin Magdziarz}
\address{Hugo Steinhaus Center, Faculty of Pure and Applied Mathematics, Wrocław University of Science and Technology, Wyspiańskiego 27, Wrocław, 50-370, Poland}
\email{marcin.magdziarz@pwr.edu.pl}

\begin{abstract}

In this paper we propose an extension of the Merton model. We apply the subdiffusive mechanism to analyze equity warrant in a fractional Brownian motion  environment, when the short rate follows the subdiffusive fractional Black-Scholes model. We obtain the pricing formula for zero-coupon bond in the introduced model and derive the partial differential equation with appropriate boundary conditions for the valuation of equity warrant. Finally, the pricing formula for equity warrant is provided under subdiffusive fractional Black-Scholes model of the short rate.
\end{abstract}
\keywords{Merton short rate model;
Subdiffusive processes;
Fractional Brownian motion;
Equity warrant}

\subjclass[2010]{91G20; 91G80; 60G22}

\maketitle
\section{Introduction}\label{sec:0}

Analysis of financial data displays that various processes in finance show certain periods in which they are constant \cite{janczura2009subdynamics}. Analogous property is observed in physical system with subduffusion. The constant periods of financial processes correspond to the trapping event in which the subdiffusive particle is motionless \cite{metzler2000random,metzler1999anomalous,eliazar2004spatial,dybiec2010subordinated}. The mathematical interpretation of subdiffusion is in terms of Fractional Fokker Planck equation $(FFPE)$. This equation was introduced from the continuous time random walk $(CTRW)$ strategy with heavy-tailed waiting times \cite{metzler1999anomalous,metzler2000random,sokolov2005diffusion}, later used as a substantial tool to evaluate complex systems with slow dynamics. In this paper, we use the fractional Black-Scholes $(FBS)$ model and the subdiffusive mechanism to better describe the dynamics observed in financial markets. We use similar strategy as in \cite{magdziarz2009black, sokolov2002solutions}, where the objective time $t$ was replaced by the inverse $\alpha$-stable subordinator $T_{\alpha}(t)$ in the $FBS$ model. $T_{\alpha}(t)$ corresponds to the fat-tailed waiting times in the underlying $CTRW$ and adds the constant periods to the dynamics of financial assets.  Then, the dynamics of asset price $V(t)$ is given by the following subdiffusive $FBS$ equation

\begin{eqnarray}
dV(T_{\alpha}(t))&=&\mu_v V(T_{\alpha}(t))d(T_{\alpha}(t))+\sigma_v V(T_{\alpha}(t))dB_1^{H}(T_{\alpha}(t)),
\label{eq:1}
\end{eqnarray}
where $\mu_v, \sigma_v$ are constant, $B_1^{H}$ is the fractional Brownian motion $(FBM)$ \cite{biagini2008stochastic}  with Hurst parameter $H \in[\frac{1}{2}, 1)$. $T_{\alpha}(t)$ is the inverse $\alpha$-stable subordinator with $\alpha\in (0,1)$ defined as
\begin{eqnarray}
T_{\alpha}(t)=\inf\{\tau>0: U_{\alpha}(\tau)>t\},
\label{eq:2}
\end{eqnarray}
Here $\{U_{\alpha}(t)\}_{t\geq0}$ is a $\alpha$-stable L\'evy process with nonnegative increments and Laplace transform: $E\left(e^{-uU_{\alpha}(t)}\right)=e^{-t u^{\alpha}}$ \cite{janicki1993simulation,gu2012time,wang2012continuous,hahn2011fokker}. When $\alpha\uparrow 1$, $T_{\alpha}(t)$ degenerates to $t$.

\begin{figure}[H]
  \centering
          \includegraphics[width=1\textwidth]{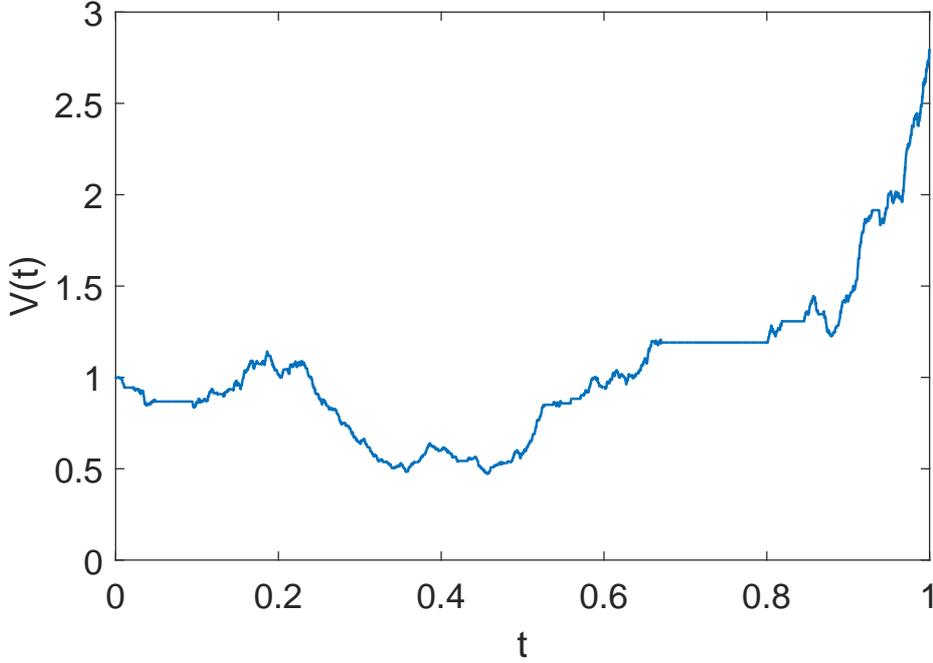}
  \caption{Typical trajectory of the asset price from formula (\ref{eq:1}). The parameters are: $\mu_v=\sigma_v=V(0)=1, H=0.7,\alpha = 0.9$.}
\label{fig1}
\end{figure}
However, all the above mentioned papers assume that the short rate is constant during the life of an option. But in reality the short rate evolves randomly over time. Hence, in order to take into account the stochastic short rate, we assume in this paper that the short rate follows the subdiffusive equation:
\begin{eqnarray}
dr(T_{\alpha}(t))&=&\mu_rd(T_{\alpha}(t))+\sigma_rdB_2^{H}(T_{\alpha}(t)).
\label{eq:3}
\end{eqnarray}
Here $\mu_r, \sigma_r$ are some constants, $B_2^{H}$ is a $FBM$ with Hurst parameter $H \in[\frac{1}{2}, 1)$ and $T_{\alpha}(t)$ is assumed to be independent of $B_2^{H}$. Moreover, $B_2^H$ and $B_1^1$  are two dependent $FBMs$ with correlation coefficient $\rho$. Additionally $T_{\alpha}(t)$ is assumed to be independent of $B_1^{H}$ and $B_2^{H}$.

\begin{figure}[H]
  \centering
          \includegraphics[width=1\textwidth]{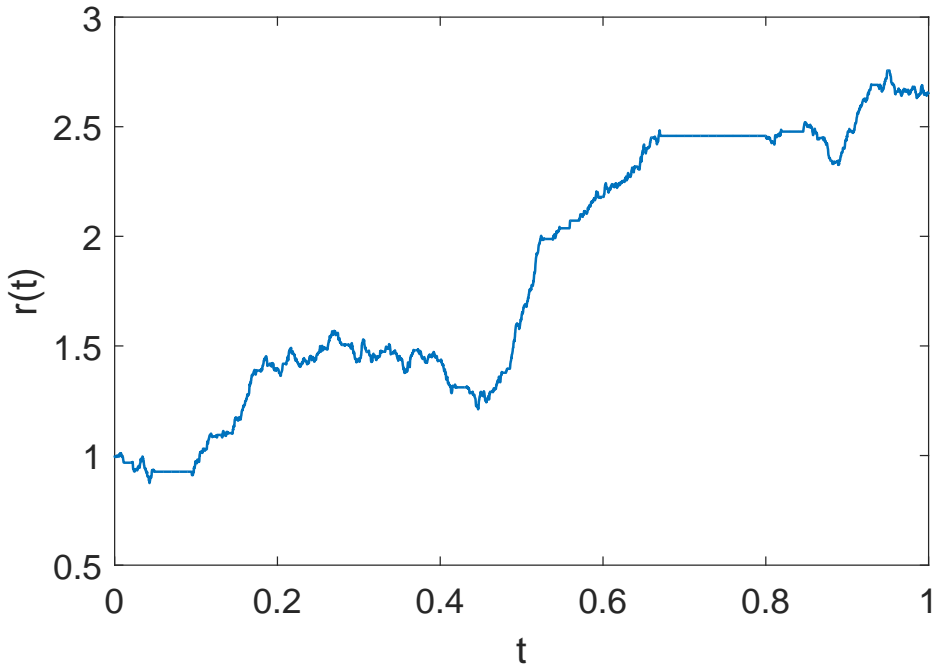}
  \caption{Typical trajectory of the short rate from formula (\ref{eq:3}) corresponding to the trajectory from Fig. \ref{fig1}. The parameters are: $\mu_r=\sigma_r=r(0)=1, H=0.7, \alpha = 0.9, \rho=0.5$.}
\label{fig2}
\end{figure}

The ongoing financial empirical evidences show that the mostly used Gaussian or, more general, Markov models may not be sufficient to capture the market structure for short interest rates \cite{vasicek1977equilibrium,cox2005theory,siu2010bond,hull1990pricing,elliott2002interest,chacko2002pricing}. One reason for this may be the fact that short rates, which are driven by macroeconomic variables, like domestic gross products, supply and demand rates or volatilities exhibit long range dependence, cannot be captured by Markov models \cite{ding1993long,tabak2005long,huang1995fractal}.
Motivated by the fact that the bond market and interest rates often, exhibit long-range dependence, in this paper, we incorporate the long memory nature of the short rate in our valuation model and derive explicit formulas for equity warrants when the short rate follows the subdiffusive fractional Brownian process.

Since the subdiffusive fractional Brownian process is a well-developed model, it is an efficient tool to capture the behavior of interest rates \cite{shokrollahi2016pricing,shokrollahi2018evaluation,shokrollahi2018subdiffusive,shokrollahi2020valuation} .
In what follows, we state some basic assumptions that will be used in this paper.
Given a risk neutral probability measure $(\Omega, \mathcal{F}, \mathbb{Q})$ on which the fractional Brownian motion is defined, we present some “ideal conditions” in the market for the firm value and for the equity warrant:
\begin{enumerate}
\item [(i)] There are no transaction costs or taxes and all securities are perfectly divisible;

\item[(ii)] Dividends are not paid during the lifetime of the outstanding warrants, and the sequential exercise of the warrants is not optimal for warrant holders;

\item[(iii)] The warrant-issuing firm is an equity firm with no outstanding debt;

\item[(iv)] There are no signaling effects associated with issuing warrants. The market neither reacts positively nor negatively to the information that the firm will issue (or has issued, respectively) warrants;

\item[(v)] Suppose that the firm has $N$ shares of common stocks and $M$ shares of equity warrants outstanding. Each warrant entitles the owner to receive $k$ shares of stocks at time $T$ upon payment of $X$.
\end{enumerate}

Assumptions (i)–(ii) are the standard assumptions in the Black–Scholes environment. Assumption (iii) implies that stocks and equity warrants are the only sources of financing that are issued by the firm. Assumption (iv) is a necessary condition for our analysis. This assumption implies that both the stock price and the firm’s value remain unaffected by warrants issuance. Actually, this condition could be achieved, see \cite{hanke2002consistent}. Assumption (v) sets some notations for the pricing model. Throughout this research, we assume that $\alpha\in(\frac{1}{2}, 1)$ and $\alpha+\alpha H>1$ (see, (\cite{gu2012time} and \cite{shokrollahi2018subdiffusive}). 

 The rest of this work is: In Section \ref{sec:1}, the pricing formula for a risk free zero-coupon bond is derived. In Section \ref{sec:2}, the corresponding $FBS$ is obtained using delta hedging strategy for equity warrant. Further, in Section \ref{sec:2} the value of equity warrant is presented when the dynamics of asset price follows a subdiffusive fractional Browian motion.

\section{Pricing model for a zero-coupon bond}\label{sec:1}
In this section, we assume that the short rate $r(t)$ satisfy Equation (\ref{eq:3}). Then, we obtain the pricing formula for zero-coupon bond $P(r, t, T)$. Here, $P(r, T, T)=1$, that is, the zero coupon bond will pay $1$ dollar at expiry date $T$.
\begin{thm}
The price of a zero-coupon bond with maturity $t\in [0, T]$ in the fractional Black-Scholes model is given by
\begin{eqnarray}
P(r, t, T)=e^{-rf_2(\tau)+f_1(\tau)},
\label{eq:18}
\end{eqnarray}
where $\tau=T-t$ and
\begin{eqnarray}
f_1(\tau)&=&\frac{H\sigma_r^2}{(\Gamma(\alpha))^{2H}}\int_0^{\tau}(T-v)^{2\alpha H-1}v^2dv\nonumber\\
&&-\frac{2H\mu_r}{(\Gamma(\alpha))^{2H}}\int_0^{\tau}(T-v)^{\alpha-1}vdv,\label{eq:16-1}\\
f_2(\tau)&=&\tau.
\label{eq:17}
\end{eqnarray}

\end{thm}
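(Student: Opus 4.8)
The plan is to derive the bond-pricing PDE from a self-financing/no-arbitrage argument applied to the short-rate dynamics (\ref{eq:3}), and then solve that PDE with the affine ansatz already written down in the statement, namely $P(r,t,T)=\exp(-rf_2(\tau)+f_1(\tau))$ with $\tau=T-t$. First I would form a portfolio consisting of bonds of two different maturities (or a bond plus a riskless account), apply the subdiffusive fractional It\^o formula to $P(r,T_\alpha(t),T)$ — using the quadratic-variation rule for $B_2^H(T_\alpha(t))$, which under the standing assumption $\alpha+\alpha H>1$ contributes a $d(T_\alpha(t))$-term with coefficient proportional to $\frac{d}{dt}\mathbb{E}[(B_2^H(T_\alpha(t)))^2]$ — and eliminate the $dB_2^H(T_\alpha(t))$ risk. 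Setting the drift of the hedged portfolio equal to $r$ times its value gives a linear parabolic PDE for $P(r,t,T)$ of the form
\begin{eqnarray}
\frac{\partial P}{\partial t}+\mu_r\frac{\partial P}{\partial r}+\frac{1}{2}\frac{d}{dt}\mathrm{Var}\big(\sigma_r B_2^H(T_\alpha(t))\big)\frac{\partial^2 P}{\partial r^2}-rP=0,\qquad P(r,T,T)=1.
\label{eq:bondpde}
\end{eqnarray}

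The key computational input is the explicit second moment of the subordinated FBM. Since $T_\alpha$ is independent of $B_2^H$, conditioning gives $\mathbb{E}[(B_2^H(T_\alpha(t)))^2]=\mathbb{E}[(T_\alpha(t))^{2H}]$, and the known Mittag–Leffler/moment formula for the inverse $\alpha$-stable subordinator yields $\mathbb{E}[(T_\alpha(t))^{2H}]=\frac{\Gamma(2H+1)}{\Gamma(2\alpha H+1)}t^{2\alpha H}$ (the paper evidently normalizes this through the $(\Gamma(\alpha))^{-2H}$ prefactor appearing in (\ref{eq:16-1})). Likewise $d(T_\alpha(t))$ has "mean rate" proportional to $t^{\alpha-1}/\Gamma(\alpha)$, which is the source of the $(T-v)^{\alpha-1}$ integrand in the $\mu_r$ term. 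I would substitute these into (\ref{eq:bondpde}).

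Next I substitute the ansatz. Writing $P=e^{-rf_2(\tau)+f_1(\tau)}$, we have $\partial_t P=(r f_2'(\tau)-f_1'(\tau))P$, $\partial_r P=-f_2(\tau)P$, $\partial_{rr}P=f_2(\tau)^2 P$. Plugging in and cancelling $P$ gives
\begin{eqnarray}
r f_2'(\tau)-f_1'(\tau)-\mu_r f_2(\tau)+\tfrac12 c(t)\,\sigma_r^2 f_2(\tau)^2-r=0,
\end{eqnarray}
where $c(t)$ denotes the time-dependent diffusion coefficient from the variance computation. Matching the coefficient of $r$ forces $f_2'(\tau)=1$, and with the terminal condition $P(r,T,T)=1$ (i.e. $f_2(0)=0$, $f_1(0)=0$) we get $f_2(\tau)=\tau$, which is (\ref{eq:17}). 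The remaining, $r$-free part is then a pure integration: $f_1'(\tau)=\tfrac12 c(t)\sigma_r^2\tau^2-\mu_r\tau$ up to the correct constants, and integrating from $0$ to $\tau$ while substituting $c$ and converting the running time variable to $v=T-\,\cdot\,$ produces exactly the two integrals in (\ref{eq:16-1}) — the first from the diffusion term (with weight $(T-v)^{2\alpha H-1}v^2$) and the second from the drift term (with weight $(T-v)^{\alpha-1}v$).

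The main obstacle, and the step I would be most careful about, is getting the subdiffusive It\^o/quadratic-variation term exactly right: one must justify that under $\alpha+\alpha H>1$ the process $B_2^H(T_\alpha(t))$ has finite "quadratic variation rate" equal to $\frac{d}{dt}\mathbb{E}[(T_\alpha(t))^{2H}]$ (this is precisely where the hypothesis $\alpha+\alpha H>1$ is used, to ensure the relevant time-integrals converge near $0$), and to confirm the normalization that yields the $(\Gamma(\alpha))^{-2H}$ and $(\Gamma(\alpha))^{-1}$ prefactors rather than the "bare" Mittag–Leffler constants. Everything after the PDE is derived is routine: coefficient-matching in $r$ and one elementary integration with a change of variable. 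I would cite the earlier-referenced moment formulas for $T_\alpha$ and the subdiffusive It\^o formula (as in \cite{magdziarz2009black,gu2012time,shokrollahi2018subdiffusive}) for the technical facts and present the rest as a direct computation.
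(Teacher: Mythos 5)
Your overall architecture is the same as the paper's: obtain a linear parabolic PDE for $P(r,t,T)$ with a time-dependent diffusion coefficient generated by the subdiffusive short-rate dynamics (\ref{eq:3}), then solve it with the affine ansatz $P=e^{f_1(\tau)-rf_2(\tau)}$, match the coefficient of $r$ to get $f_2(\tau)=\tau$, and integrate the remaining ODE with the substitution $v=T-t$. (The paper does not form a two-bond hedge; it Taylor-expands $P(r+\Delta r,t+\Delta t)$ and imposes the local expectations hypothesis $\mu=r$, but that difference is cosmetic.)

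The genuine gap is in what you call the key computational input. You take the diffusion coefficient to be $\tfrac12\tfrac{d}{dt}\mathrm{Var}\bigl(\sigma_r B_2^H(T_\alpha(t))\bigr)=\tfrac12\sigma_r^2\tfrac{d}{dt}\E\bigl[(T_\alpha(t))^{2H}\bigr]=\sigma_r^2\,\alpha H\,\tfrac{\Gamma(2H+1)}{\Gamma(2\alpha H+1)}\,t^{2\alpha H-1}$, using the exact Mittag--Leffler moment of the inverse subordinator. The paper's PDE (\ref{eq:11}) instead carries the coefficient $H\sigma_r^2 t^{2H-1}\bigl(\tfrac{t^{\alpha-1}}{\Gamma(\alpha)}\bigr)^{2H}=\tfrac{H\sigma_r^2}{(\Gamma(\alpha))^{2H}}\,t^{2\alpha H-1}$, obtained from the increment-level heuristics of \cite{gu2012time,shokrollahi2018subdiffusive}: $\Delta T_\alpha(t)\approx\tfrac{t^{\alpha-1}}{\Gamma(\alpha)}\Delta t$, hence $(\Delta r)^2\approx\sigma_r^2\bigl(\tfrac{t^{\alpha-1}}{\Gamma(\alpha)}\bigr)^{2H}(\Delta t)^{2H}$, combined with the fractional scaling rule $(\Delta t)^{2H}\sim 2Ht^{2H-1}\Delta t$. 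The two constants $\alpha H\,\Gamma(2H+1)/\Gamma(2\alpha H+1)$ and $H/(\Gamma(\alpha))^{2H}$ coincide only for $H=\tfrac12$ or $\alpha=1$; for general $(\alpha,H)$ they differ, so integrating your ODE would not reproduce the stated $f_1$ in (\ref{eq:16-1}). ``Confirming the normalization'' is therefore not a routine bookkeeping step but the crux: to land on the theorem's formula you must adopt the paper's specific approximation scheme rather than the exact-moment computation, and nothing in the hypothesis $\alpha+\alpha H>1$ reconciles the two. Two minor further points: your displayed PDE omits the factor $t^{\alpha-1}/\Gamma(\alpha)$ on $\mu_r\,\partial P/\partial r$ (your prose restores it; with it your drift integral matches the $-\tfrac{\mu_r}{\Gamma(\alpha)}\int_0^{\tau}(T-v)^{\alpha-1}v\,dv$ obtained at the end of the paper's proof, which itself is inconsistent with the prefactor $2H\mu_r/(\Gamma(\alpha))^{2H}$ printed in the theorem statement), and the first integrand in (\ref{eq:16-1}) carries $v^2$ because $f_2(\tau)^2=\tau^2$, which your sketch does state correctly.
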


\begin{proof}
By the Taylor series expansion, we can get

\begin{eqnarray}
P(r+\Delta r, t+\Delta t)&=& P(r,t,T)+\frac{\partial P}{\partial r}\Delta r+\frac{\partial P}{\partial t}\Delta t\nonumber\\
&&+\frac{1}{2}\frac{\partial^2 P}{\partial r^2}(\Delta r)^2++\frac{1}{2}\frac{\partial^2 P}{\partial r\partial t}\Delta r(\Delta t)+\frac{1}{2}\frac{\partial^2 P}{\partial t^2}(\Delta t)^2+O(\Delta t).
\label{eq:6}
\end{eqnarray}
From Eq. (\ref{eq:3}) and \cite{gu2012time} and \cite{shokrollahi2018subdiffusive}, we have

\begin{eqnarray}
\Delta r&=&\mu_r(\Delta T_{\alpha}(t))+\sigma_rB_2^H(T_{\alpha}(t))\nonumber\\
&=&\mu_r\left(\frac{t^{\alpha-1}}{\Gamma(\alpha)}\right)(\Delta t)+\sigma_r\Delta B_2^H(T_{\alpha}(t))+O(\Delta t).\\
(\Delta r)^2&=&\sigma_r^2\left(\frac{t^{\alpha-1}}{\Gamma(\alpha)}\right)^{2H}(\Delta t)^{2H}+O(\Delta t).\\
\Delta r(\Delta t)&=&O(\Delta t).
\label{eq:7}
\end{eqnarray}
Hence
\begin{eqnarray}
dP(r, t ,T)&=&\left[\mu_r\frac{t^{\alpha-1}}{\Gamma(\alpha)}\frac{\partial P}{\partial r}+\sigma_r^2Ht^{2H-1}\left(\frac{t^{\alpha-1}}{\Gamma(\alpha)}\right)^{2H}\frac{\partial^2 P}{\partial r^2}+\frac{\partial P}{\partial t}\right]dt\nonumber\\
&&+\sigma_r\frac{\partial P}{\partial t}dB_2^H(T_{\alpha}(t)).
\label{eq:8}
\end{eqnarray}
Putting
\begin{eqnarray}
\mu&=&\frac{1}{P}\left[\mu_r\frac{t^{\alpha-1}}{\Gamma(\alpha)}\frac{\partial P}{\partial r}+\sigma_r^2Ht^{2H-1}\left(\frac{t^{\alpha-1}}{\Gamma(\alpha)}\right)^{2H}\frac{\partial^2 P}{\partial r^2}+\frac{\partial P}{\partial t}\right],\nonumber\\
\sigma&=&\frac{1}{P}\left(\frac{\partial P}{\partial r}\right),
\label{eq:9}
\end{eqnarray}
and letting the local expectations hypothesis holds for the term structure of interest rates
(i.e. $\mu=r$), we have
\begin{eqnarray}
&&\frac{\partial P}{\partial t}+\mu_r\frac{t^{\alpha-1}}{\Gamma(\alpha)}\frac{\partial P}{\partial r}\nonumber\\
&&+Ht^{2H-1}\sigma_r^2\left(\frac{t^{\alpha-1}}{\Gamma(\alpha)}\right)^{2H}\frac{\partial^2 P}{\partial r^2}-rP=0.
\label{eq:10}
\end{eqnarray}
Then, zero-coupon bond $P(r, t, T)$ with boundary condition $P(r, t, T)=1$ satisfies the following partial differential equation
\begin{eqnarray}
&&\frac{\partial P}{\partial t}+\mu_r\frac{t^{\alpha-1}}{\Gamma(\alpha)}\frac{\partial P}{\partial r}\nonumber\\
&&+Ht^{2H-1}\sigma_r^2\left(\frac{t^{\alpha-1}}{\Gamma(\alpha)}\right)^{2H}\frac{\partial^2 P}{\partial r^2}-rP=0.
\label{eq:11}
\end{eqnarray}
To solve Equation (\ref{eq:11}) for $P(r, t, T)$, let $\tau =T-t, P(r, t, T) = \exp\{f_1(\tau)-rf_2(\tau)\} $, then  we can get
\begin{eqnarray}
\frac{\partial P}{\partial t}&=&P\left(-\frac{\partial f_1(\tau)}{\partial \tau}+r\frac{\partial f_2(\tau)}{\partial \tau}\right),
\label{eq:12}
\end{eqnarray}
\begin{eqnarray}
\frac{\partial P}{\partial r}&=&-Pf_2(\tau),
\label{eq:13}
\end{eqnarray}
\begin{eqnarray}
\frac{\partial^2 P}{\partial r^2}&=&Pf_2^2(\tau).
\label{eq:14}
\end{eqnarray}
Replacing Equations (\ref{eq:13}) and (\ref{eq:14}) into Equation (\ref{eq:12}) and simplifying Equation (\ref{eq:11}) we get
\begin{eqnarray}
&&P\Bigg[Ht^{2H-1}\sigma_r^2f_2(\tau)^2\left(\frac{t^{\alpha-1}}{\Gamma(\alpha)}\right)^{2H}-\mu_rf_2(\tau)\frac{t^{\alpha-1}}{\Gamma(\alpha)}\nonumber\\&&-\frac{\partial f_1(\tau)}{\partial \tau}
+r\left(\frac{\partial f_2(\tau)}{\partial \tau}-1\right)\Bigg]=0.
\label{eq:15}
\end{eqnarray}
From Equation (\ref{eq:15}), we obtain
\begin{eqnarray}
\frac{\partial f_1(\tau)}{\partial \tau}&=&\sigma_r^2Ht^{2H-1}\left(\frac{t^{\alpha-1}}{\Gamma(\alpha)}\right)^{2H}f_2(\tau)^2-\mu_r\frac{t^{\alpha-1}}{\Gamma(\alpha)}f_2(\tau),\nonumber\\
\frac{\partial f_2(\tau)}{\partial \tau}&=&1.
\label{eq:16}
\end{eqnarray}
Then,
\begin{eqnarray}
f_1(\tau)&=&\frac{H\sigma_r^2}{(\Gamma(\alpha))^{2H}}\int_0^{\tau}(T-v)^{2\alpha H-1}v^2dv\nonumber\\
&&-\frac{\mu_r}{\Gamma(\alpha)}\int_0^{\tau}(T-v)^{\alpha-1}vdv,\label{eq:16-1}\\
f_2(\tau)&=&\tau.
\label{eq:17}
\end{eqnarray}
This ends the proof.
\end{proof}

\begin{figure}[H]
  \centering
          \includegraphics[width=1\textwidth]{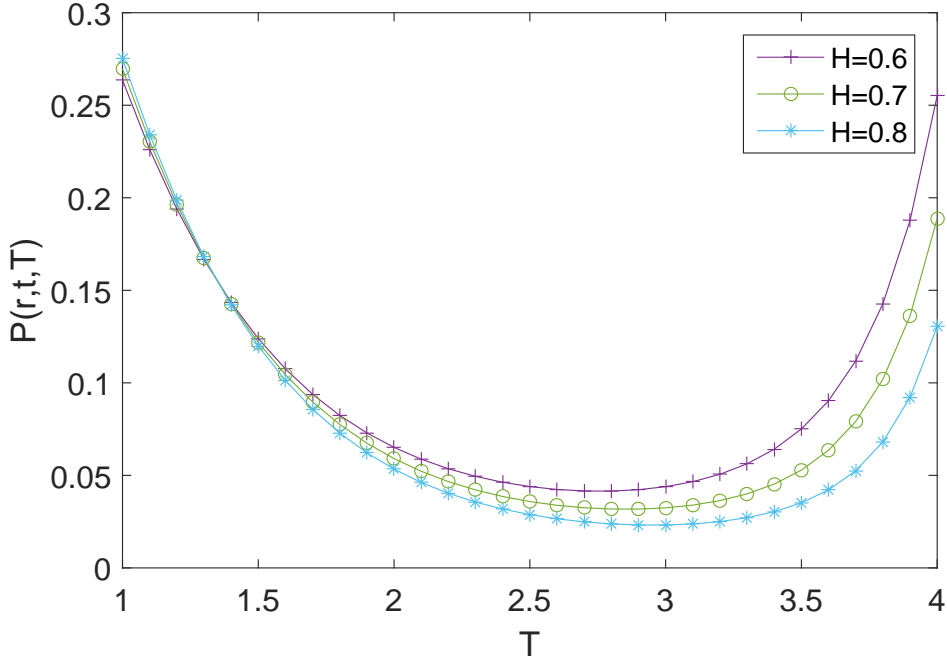}
  \caption{Bond price as a function of maturity time T for different values of $H$, see formula (\ref{eq:18}). The parameters are: $ \mu_r=\sigma_r=r(0)=1, \alpha = 0.9, t=0$.}
\label{fig3}
\end{figure}

\begin{cor}

When $\alpha\uparrow1$, Equations (\ref{eq:1}) and (\ref{eq:3}) reduce to the $FBS$. Then we obtain
\begin{eqnarray}
f_1(\tau)&=&H\sigma_r^2\int_0^{\tau}(T-v)^{2H-1}v^2dv-\mu_r\int_0^{\tau}vdv,
\label{eq:18-1}
\end{eqnarray}
specially, if $t=0$
\begin{eqnarray}
f_1(\tau)&=&\sigma_r^2\frac{T^{2H+2}}{(2H+1)(2H+2)}-\mu_r\frac{T^2}{2},
\label{eq:18-2}
\end{eqnarray}
then

\begin{eqnarray}
P(r, t, T)=\exp\left\{-rT+\sigma_r^2\frac{T^{2H+2}}{(2H+1)(2H+2)}-\mu_r\frac{T^2}{2}\right\}.
\label{eq:18-3}
\end{eqnarray}
\label{cor:1}
\end{cor}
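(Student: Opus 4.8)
The plan is to derive the corollary directly from the Theorem by passing to the limit $\alpha\uparrow 1$ in the coefficient functions and then specializing $t=0$. Since, as recalled in the introduction, $T_\alpha(t)\to t$ when $\alpha\uparrow 1$, equations \eqref{eq:1} and \eqref{eq:3} collapse to the ordinary fractional Black--Scholes dynamics, so it suffices to track what happens to $f_1$ and $f_2$ in formula \eqref{eq:18}.

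First I would observe that as $\alpha\uparrow 1$ one has $\Gamma(\alpha)\to\Gamma(1)=1$, hence the prefactor $(\Gamma(\alpha))^{-2H}\to 1$, while in the integrands the exponents converge pointwise: $(T-v)^{2\alpha H-1}\to(T-v)^{2H-1}$ and $(T-v)^{\alpha-1}\to(T-v)^{0}=1$. On the compact interval $v\in[0,\tau]$ these integrands are dominated, uniformly for $\alpha$ in a left-neighbourhood of $1$, by an integrable function (the standing assumption $H\in[\tfrac12,1)$ keeps $2H-1\ge 0$, so no singularity at $v=\tau$ arises, and $\alpha+\alpha H>1$ is automatic for $\alpha$ near $1$), so dominated convergence permits passing the limit under the integral sign. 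This yields $f_1(\tau)=H\sigma_r^2\int_0^{\tau}(T-v)^{2H-1}v^2\,\d v-\mu_r\int_0^{\tau}v\,\d v$, which is \eqref{eq:18-1}, with $f_2(\tau)=\tau$ unchanged.

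Next I would set $t=0$, so $\tau=T$. The first integral is a Beta integral: the substitution $v=Ts$ gives $\int_0^{T}(T-v)^{2H-1}v^2\,\d v=T^{2H+2}\int_0^1 s^2(1-s)^{2H-1}\,\d s=T^{2H+2}\,\B(3,2H)=T^{2H+2}\,\dfrac{\Gamma(3)\,\Gamma(2H)}{\Gamma(2H+3)}=\dfrac{T^{2H+2}}{H(2H+1)(2H+2)}$, using $\Gamma(3)=2$ and $\Gamma(2H+3)=(2H+2)(2H+1)(2H)\,\Gamma(2H)$. Multiplying by $H\sigma_r^2$ cancels the factor $H$ and produces $\sigma_r^2\dfrac{T^{2H+2}}{(2H+1)(2H+2)}$, while $\int_0^{T}v\,\d v=T^2/2$ contributes $-\mu_r T^2/2$; together these give \eqref{eq:18-2}. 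Substituting $f_1(T)$ and $f_2(T)=T$ into $P(r,t,T)=e^{-rf_2(\tau)+f_1(\tau)}$ then yields \eqref{eq:18-3}.

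There is no real obstacle here: the argument is a limit computation followed by an elementary Beta-function evaluation. The only point deserving a line of care is the justification for interchanging $\lim_{\alpha\uparrow1}$ with the integrals, which is where the hypothesis $H\in[\tfrac12,1)$ is used to guarantee that the integrands stay in an integrable range throughout the limiting procedure.
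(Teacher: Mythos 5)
Your proposal is correct and is essentially the derivation the paper intends (the corollary is stated without proof): pass to the limit $\alpha\uparrow1$ in the expression for $f_1$ obtained in the theorem's proof, where $\Gamma(\alpha)\to1$ and the exponents $2\alpha H-1$, $\alpha-1$ tend to $2H-1$, $0$, then set $t=0$ and evaluate $\int_0^T(T-v)^{2H-1}v^2\,\d v=T^{2H+2}\B(3,2H)=\frac{T^{2H+2}}{H(2H+1)(2H+2)}$, exactly as you do. Note that your limit reproduces the drift coefficient $\mu_r/\Gamma(\alpha)$ appearing in the proof of the theorem (which is what the corollary uses), rather than the $2H\mu_r/(\Gamma(\alpha))^{2H}$ misprinted in the theorem statement, and your dominated-convergence remark is a harmless extra justification.
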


\begin{cor}
If $H=\frac{1}{2}$, from Equation (\ref{eq:16-1}), we obtain

\begin{eqnarray}
f_1(\tau)&=&\frac{1}{2}\frac{\sigma_r^2}{\Gamma(\alpha)}\int_0^{\tau}(T-v)^{\alpha-1}v^2dv\nonumber\\
&&-\frac{\mu_r}{\Gamma(\alpha)}\int_0^{\tau}(T-v)^{\alpha-1}vdv,
\label{eq:19}
\end{eqnarray}
then the result is consistent with the result in \cite{guo2017option}.

Further, if $\alpha\uparrow1$ and $H=\frac{1}{2}$, Equations (\ref{eq:1}) and (\ref{eq:3}) reduce to the geometric Brownian motion, then we have
\begin{eqnarray}
f_1(\tau)=\frac{1}{6}\sigma_r^2\tau^3-\frac{1}{2}\mu_r\tau^2,
\label{eq:20}
\end{eqnarray}
then
\begin{eqnarray}
P(r, t, T)=e^{-r\tau+\frac{1}{6}\sigma_r^2\tau^3-\frac{1}{2}\mu_r\tau^2},
\label{eq:21}
\end{eqnarray}
which is consistent with the result in \cite{kung2009option,cui2010comment}.
\label{cor:2}
\end{cor}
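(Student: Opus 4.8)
The plan is to obtain Corollary~\ref{cor:2} as a two-step specialization of Theorem~2.1, needing no new analytic input beyond elementary integration and one routine limit interchange. First I would take the formula~(\ref{eq:16-1}) for $f_1(\tau)$ and set $H=\tfrac12$. Then $2H=1$, so $(\Gamma(\alpha))^{2H}=\Gamma(\alpha)$ and the exponent $2\alpha H-1$ in the first integrand becomes $\alpha-1$, while the prefactor $H\sigma_r^2$ becomes $\tfrac12\sigma_r^2$ and $2H\mu_r$ becomes $\mu_r$. This yields~(\ref{eq:19}) immediately. For the claim that this is consistent with \cite{guo2017option}, I would remark that taking $H=\tfrac12$ removes the long-range dependence of the driving noise while keeping the subordination by $T_\alpha$, so~(\ref{eq:3}) collapses to precisely the subdiffusive (non-fractional) Merton short-rate equation studied there; comparing the two closed forms for $f_1$ term by term then closes this case.

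For the second statement I would additionally pass to the limit $\alpha\uparrow1$ in~(\ref{eq:19}). Pointwise on $(0,\tau)$ one has $\Gamma(\alpha)\to\Gamma(1)=1$ and $(T-v)^{\alpha-1}\to1$; since $v\le\tau<T$, the kernels $(T-v)^{\alpha-1}$ stay uniformly bounded for $\alpha$ near $1$, so dominated convergence permits moving the limit inside the $v$-integrals. What remains are the elementary integrals $\int_0^\tau v^2\,\d v=\tau^3/3$ and $\int_0^\tau v\,\d v=\tau^2/2$, which give $f_1(\tau)=\tfrac16\sigma_r^2\tau^3-\tfrac12\mu_r\tau^2$, i.e.~(\ref{eq:20}). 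Substituting this together with $f_2(\tau)=\tau$ into~(\ref{eq:18}) produces~(\ref{eq:21}). Consistency with \cite{kung2009option,cui2010comment} then follows by noting that $\alpha\uparrow1$ degenerates the inverse stable subordinator, $T_\alpha(t)\to t$, so with $H=\tfrac12$ equation~(\ref{eq:3}) reduces to the classical Merton/Ho--Lee short-rate dynamics and~(\ref{eq:21}) is the standard zero-coupon bond price in that model.

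The only point requiring genuine care is the interchange of the limit $\alpha\uparrow1$ with the $v$-integration; everything else is substitution and bookkeeping. I would also flag the cosmetic mismatch between the coefficient $2H\mu_r/(\Gamma(\alpha))^{2H}$ written in~(\ref{eq:16-1}) and the $\mu_r/\Gamma(\alpha)$ that appears in the proof of Theorem~2.1 via the ODE~(\ref{eq:16}); because $2H=1$ when $H=\tfrac12$, both conventions produce the same $f_1$ in~(\ref{eq:19}), so the corollary is unaffected either way.
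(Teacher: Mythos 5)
Your proposal is correct and follows essentially the same route as the paper, which states the corollary as a direct specialization: substitute $H=\tfrac12$ into (\ref{eq:16-1}) (so $2\alpha H-1=\alpha-1$, $(\Gamma(\alpha))^{2H}=\Gamma(\alpha)$), let $\alpha\uparrow 1$ so that the integrals reduce to $\int_0^\tau v^2\,\d v$ and $\int_0^\tau v\,\d v$, and insert the resulting $f_1$, $f_2$ into (\ref{eq:18}); your remark that the coefficient discrepancy $2H\mu_r/(\Gamma(\alpha))^{2H}$ versus $\mu_r/\Gamma(\alpha)$ is immaterial at $H=\tfrac12$ is apt. One small nitpick: your justification of the limit interchange via uniform boundedness of $(T-v)^{\alpha-1}$ tacitly assumes $\tau<T$ (i.e.\ $t>0$); for $t=0$ the kernel is unbounded near $v=T$, but an integrable dominating function such as $\max\{1,(T-v)^{\alpha_0-1}\}v^2$ for $\alpha\in(\alpha_0,1)$ still applies, so the conclusion stands.
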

\section{Fractional Black-Scholes equation}\label{sec:2}
This section provides corresponding $FBS$ equation for equity warrants when the stock price and short rate satisfy Eqs. (\ref{eq:1}) and (\ref{eq:3}), respectively. Recall that $B_1^{H}$ and $B_2^H$ are two dependent $FBM$ with Hurst parameter $H \in[\frac{1}{2}, 1)$ and correlation coefficient $\rho$.

\begin{thm} Let assumptions  (i)–(v)  hold and $W_t$ be the valuation of the equity warrant. Then the valuation of equity warrant at time $t\in [0, T]$ satisfies the following $PDE$ and the following boundary condition
\begin{eqnarray}
&&\frac{\partial W}{\partial t}+\widetilde{\sigma}_v^2(t)V^2\frac{\partial^2 W}{\partial V^2}+\widetilde{\sigma}_r^2(t)\frac{\partial^2 W}{\partial r^2}+2\rho\widetilde{\sigma}_r(t)\widetilde{\sigma}_v(t)\frac{\partial^2 W}{\partial V\partial r}\nonumber\\
&&+\mu_r\frac{t^{\alpha-1}}{\Gamma(\alpha)}\frac{\partial W}{\partial r}+rV\frac{\partial W}{\partial V}-rW=0,
\label{eq:22}
\end{eqnarray}
with boundary condition
\begin{eqnarray}
W_T=\frac{1}{N+Mk}(kV_T-NX)^+
\end{eqnarray}
where
\begin{eqnarray}
\widetilde{\sigma}_v^2(t)=Ht^{2H-1}\sigma_v^2\left(\frac{t^{\alpha-1}}{\Gamma(\alpha)}\right)^{2H},
\label{eq:23}
\end{eqnarray}
\begin{eqnarray}
\widetilde{\sigma}_r^2(t)=Ht^{2H-1}\sigma_r^2\left(\frac{t^{\alpha-1}}{\Gamma(\alpha)}\right)^{2H}.
\label{eq:24}
\end{eqnarray}
$\sigma_v, \sigma_r, \mu_r, \mu_v,$ are constant, $H\in[\frac{1}{2}, 1)$ and $\alpha\in (\frac{1}{2}, 1)$.
\label{th:1}
\end{thm}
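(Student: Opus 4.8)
The plan is to repeat the scheme of the proof of Theorem~2.1 in the two spatial variables $V,r$, and then to close the equation with a delta--hedging portfolio built from the warrant, the stock $V$, and the zero--coupon bond $P$ of Theorem~2.1. First I would expand $W=W(V,r,t)$ to second order as in \eqref{eq:6}, substitute the subdiffusive dynamics \eqref{eq:1} and \eqref{eq:3}, and use the same increment rules as in \eqref{eq:7}: $\d T_\alpha(t)\sim\frac{t^{\alpha-1}}{\Gamma(\alpha)}\d t$, $\tfrac12(\d V)^2\sim\widetilde\sigma_v^2(t)V^2\d t$, $\tfrac12(\d r)^2\sim\widetilde\sigma_r^2(t)\d t$ with $\widetilde\sigma_v,\widetilde\sigma_r$ given by \eqref{eq:23}--\eqref{eq:24}, and, for the two $\rho$--correlated fractional noises, the mixed rule $\frac{\partial^2W}{\partial V\partial r}(\d V)(\d r)\sim 2\rho\,\widetilde\sigma_v(t)\widetilde\sigma_r(t)\frac{\partial^2W}{\partial V\partial r}\d t$. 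Discarding the higher--order terms (this is the step legitimated by $H\in[\tfrac12,1)$, $\alpha\in(\tfrac12,1)$, $\alpha+\alpha H>1$, cf.\ \cite{gu2012time,shokrollahi2018subdiffusive}) yields
\begin{align*}
\d W&=\Big[\tfrac{\partial W}{\partial t}+\mu_vV\tfrac{t^{\alpha-1}}{\Gamma(\alpha)}\tfrac{\partial W}{\partial V}+\mu_r\tfrac{t^{\alpha-1}}{\Gamma(\alpha)}\tfrac{\partial W}{\partial r}+\widetilde\sigma_v^2(t)V^2\tfrac{\partial^2W}{\partial V^2}+\widetilde\sigma_r^2(t)\tfrac{\partial^2W}{\partial r^2}+2\rho\widetilde\sigma_v(t)\widetilde\sigma_r(t)\tfrac{\partial^2W}{\partial V\partial r}\Big]\d t\\
&\qquad+\sigma_vV\tfrac{\partial W}{\partial V}\,\d B_1^H(T_\alpha(t))+\sigma_r\tfrac{\partial W}{\partial r}\,\d B_2^H(T_\alpha(t)).
\end{align*}

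Next I would form $\Pi_t=W_t-\Delta^{(1)}_tV_t-\Delta^{(2)}_tP_t$ with $\Delta^{(1)}_t=\frac{\partial W}{\partial V}$ and $\Delta^{(2)}_t=\big(\frac{\partial P}{\partial r}\big)^{-1}\frac{\partial W}{\partial r}$. With this choice the $\d B_1^H(T_\alpha)$ term of $\d\Pi$ cancels (only $V$ carries the noise $B_1^H$) and the $\d B_2^H(T_\alpha)$ term cancels against the bond's fractional diffusion term $\sigma_r\frac{\partial P}{\partial r}\d B_2^H(T_\alpha)$ read off from \eqref{eq:8}, so $\Pi$ is instantaneously riskless and no arbitrage forces $\d\Pi=r\Pi\,\d t$. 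Equating the drifts: the term $\mu_vV\frac{t^{\alpha-1}}{\Gamma(\alpha)}\frac{\partial W}{\partial V}$ from $\d W$ cancels $-\Delta^{(1)}$ times the drift of $V$; the bond contribution is removed by invoking the bond PDE \eqref{eq:11}, which makes the drift of $P$ equal to $rP$, so that $-\Delta^{(2)}$ times the drift of $P$ equals $-r\Delta^{(2)}P$ and cancels the matching term of $r\Pi$; and the term $-r\Delta^{(1)}V=-rV\frac{\partial W}{\partial V}$ produces the advection term $rV\frac{\partial W}{\partial V}$. What remains is precisely \eqref{eq:22}. The terminal condition is the standard dilution identity under assumption~(v): on exercise the firm receives $MX$ in cash and issues $Mk$ new shares, so the post--exercise per--share value is $(V_T+MX)/(N+Mk)$; a warrant delivers $k$ such shares against payment $X$, hence has payoff $\frac{k(V_T+MX)}{N+Mk}-X=\frac{kV_T-NX}{N+Mk}$, exercised only when positive, which gives $W_T=\frac1{N+Mk}(kV_T-NX)^+$.

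The main obstacle is the hedging step: one has to be sure that the two correlated fractional noises $B_1^H,B_2^H$ can be neutralised \emph{simultaneously} by the positions in $V$ and $P$, and that plugging the bond PDE \eqref{eq:11} into $\d\Pi=r\Pi\,\d t$ makes every $P$--dependent term disappear cleanly, leaving the short--rate drift $\mu_r\frac{t^{\alpha-1}}{\Gamma(\alpha)}$ in \eqref{eq:22} untouched (the short rate is not a traded asset, so no market--price--of--risk adjustment is introduced, exactly as under the local--expectations hypothesis used for Theorem~2.1). Everything else is the routine bookkeeping of the fractional Itô expansion already performed in the proof of Theorem~2.1.
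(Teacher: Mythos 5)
Your proposal follows essentially the same route as the paper: a second-order fractional expansion of $W(V,r,t)$ with the subdiffusive increment rules, a delta-hedged portfolio $\Pi=W-\frac{\partial W}{\partial V}V-\big(\frac{\partial P}{\partial r}\big)^{-1}\frac{\partial W}{\partial r}P$ eliminating both correlated noises, the bond PDE of Theorem~2.1 to remove the $P$-terms, and the no-arbitrage condition $\d\Pi=r\Pi\,\d t$ to obtain \eqref{eq:22}. The only difference is that you justify the terminal payoff $W_T=\frac{1}{N+Mk}(kV_T-NX)^+$ by the dilution argument, which the paper simply asserts; this is a correct and welcome addition rather than a different approach.
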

\begin{proof}
It is worth pointing out that $W_t$ is a function of the current value of the underlying asset $V_t$, the stochastic interest rate $r_t$ and time $t$: $W_t=W(V_t, r_t, t)$. Then, at the expiry date $t=T$, we have
\begin{eqnarray}
W_T=\frac{1}{N+Mk}(kV_T-NX)^+,
\end{eqnarray}
which is the boundary condition of Eq \ref{eq:22}.
Consider a portfolio with one share of warrants, $D_{1t}$ shares of stock and $D_{2t}$ shares of zero-coupon bond $P(r, t, T)$. Then
the value of the portfolio $\Pi$ at current time $t$ is
\begin{eqnarray}
\Pi_t=W_t-D_{1t}V_t-D_{2t}P_t.
\label{eq:25}
\end{eqnarray}
Then, from \cite{guo2017option} we have
\begin{eqnarray}
d\Pi_t&=&dW_t-D_{1t}dV_t-D_{2t}dP_t\nonumber\\
&=&\Bigg[\frac{\partial W}{\partial t}dt+Ht^{2H-1}\sigma_v^2V_t^2\left(\frac{t^{\alpha-1}}{\Gamma(\alpha)}\right)^{2H}\frac{\partial^2 W}{\partial V^2}+Ht^{2H-1}\sigma_r^2\left(\frac{t^{\alpha-1}}{\Gamma(\alpha)}\right)^{2H}\frac{\partial^2 W}{\partial r^2}\nonumber\\
&+&2Ht^{2H-1}\rho\sigma_r\sigma_vV\left(\frac{t^{\alpha-1}}{\Gamma(\alpha)}\right)^{2H}\frac{\partial^2 W}{\partial V\partial r}\Bigg]dt+\Bigg[\frac{\partial W}{\partial t}-D_{1t}\Bigg]dV_t\nonumber\\
&+&\Bigg[\frac{\partial W}{\partial r}-D_{2t}\frac{\partial P}{\partial r}\Bigg]dr+D_{2t}\Bigg[\frac{\partial P}{\partial t}+Ht^{2H-1}\sigma_r^2\left(\frac{t^{\alpha-1}}{\Gamma(\alpha)}\right)^{2H}\frac{\partial^2 P}{\partial r^2}\Bigg]dt.
\label{eq:26}
\end{eqnarray}
Set $D_{1t}=\frac{\partial W}{\partial V},\,D_{2t}=\frac{\frac{\partial W}{\partial r}}{\frac{\partial P}{\partial r}},$ to eliminate the stochastic noise. Then
\begin{eqnarray}
d\Pi_t&=&\nonumber\\
&=&\Bigg[\frac{\partial W}{\partial t}+Ht^{2H-1}\left(\frac{t^{\alpha-1}}{\Gamma(\alpha)}\right)^{2H}\left(\sigma_v^2V^2\frac{\partial^2 V}{\partial V^2}+\sigma_r^2\frac{\partial^2 W}{\partial r^2}+2\rho\sigma_r\sigma_vV\frac{\partial^2 W}{\partial V\partial r}\right)\Bigg]dt\nonumber\\
&-&\frac{\frac{\partial W}{\partial r}}{\frac{\partial P}{\partial r}}\Bigg[rP-\mu_r\frac{t^{\alpha-1}}{\Gamma(\alpha)}\frac{\partial P}{\partial r}\Bigg]dt.
\label{eq:27}
\end{eqnarray}

The return of the amount $\Pi_t$ invested in bank account
is equal to $r(t)\Pi_tdt$ at time $dt$, $\E(d\Pi_t)=r(t)\Pi_tdt=r(t)\left(W_t-D_{1t}V_t-D_{2t}P_t\right)$, hence from Equation (\ref{eq:27}) we have

\begin{eqnarray}
&&\frac{\partial W}{\partial t}+Ht^{2H-1}\left(\frac{t^{\alpha-1}}{\Gamma(\alpha)}\right)^{2H}\left(\sigma_v^2V^2\frac{\partial^2 C}{\partial S^2}+\sigma_r^2\frac{\partial^2 W}{\partial r^2}+2\rho\sigma_r\sigma_vV\frac{\partial^2 W}{\partial V\partial r}\right)\nonumber\\
&&+\mu_r\frac{t^{\alpha-1}}{\Gamma(\alpha)}\frac{\partial W}{\partial r}+rv\frac{\partial W}{\partial V}-rW=0.
\label{eq:28}
\end{eqnarray}
Let
\begin{eqnarray}
\widetilde{\sigma}_v^2(t)=Ht^{2H-1}\sigma_v^2\left(\frac{t^{\alpha-1}}{\Gamma(\alpha)}\right)^{2H},
\label{eq:29}
\end{eqnarray}
\begin{eqnarray}
\widetilde{\sigma}_r^2(t)=Ht^{2H-1}\sigma_r^2\left(\frac{t^{\alpha-1}}{\Gamma(\alpha)}\right)^{2H}.
\label{eq:30}
\end{eqnarray}
Then
\begin{eqnarray}
&&\frac{\partial W}{\partial t}+\widetilde{\sigma}_v^2(t)V_t^2\frac{\partial^2 W}{\partial V_t^2}+\widetilde{\sigma}_r^2(t)\frac{\partial^2 W}{\partial r^2}+2\rho\widetilde{\sigma}_r(t)\widetilde{\sigma}_v(t)\frac{\partial^2 W}{\partial V\partial r}\nonumber\\
&&+\mu_r\frac{t^{\alpha-1}}{\Gamma(\alpha)}\frac{\partial W}{\partial r}+rV\frac{\partial W}{\partial V}-rW=0,
\label{eq:31}
\end{eqnarray}
and the proof is completed.
\end{proof}

\begin{thm}
Let assumptions  (i)–(v)  hold and $W_t$ be the value of the equity warrant. Then the value of an equity warrant at time $t\in [0, T]$ is given by
\begin{eqnarray}
W_t&=&\frac{1}{N+Mk}\left(kV_t\phi(d_1)-NXe^{-r(T-t)}P(r, t, T)\phi(d_2)\right),\\
\label{eq:35}
\end{eqnarray}
where
\begin{eqnarray}
d_1&=&\frac{\ln\frac{kV_t}{NX}-\ln P(r, t, T)+\frac{H}{(\Gamma(\alpha))^{2H}}\int_t^T\widehat{\sigma}^2(v)v^{2\alpha H-1}dv}{\sqrt{\frac{2H}{(\Gamma(\alpha))^{2H}}\int_t^T\widehat{\sigma}^2(v)v^{2\alpha H-1}dv}},
\label{eq:36}\\
d_2&=&d_1-\sqrt{\frac{2H}{(\Gamma(\alpha))^{2H}}\int_t^T\widehat{\sigma}^2(v)v^{2\alpha H-1}dv},\\
\widehat{\sigma}^2(t)&=&\sigma_v^2+2\rho\sigma_r\sigma_v(T-t)+\sigma_r^2(T-t)^2.
\label{eq:37}
\end{eqnarray}
$P(r, t, T)$ is given by Eq (\ref{eq:18}) and $\phi(.)$ is the cumulative standard normal distribution function.
\label{th:2}
\end{thm}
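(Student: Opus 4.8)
The plan is to integrate the terminal–value problem for the warrant in closed form. I start from the parabolic equation \eqref{eq:22} of Theorem \ref{th:1} together with the terminal condition $W_T=\frac{1}{N+Mk}(kV_T-NX)^+$, and I reduce it, through a chain of deterministic substitutions, to the classical heat equation; the solution of the latter against the given payoff is a pair of Gaussian integrals which produce the $\phi(d_1),\phi(d_2)$ terms of \eqref{eq:35}.

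The first step is to \emph{deflate with the bond}. Put $\tau=T-t$ and write $W(V,r,t)=\frac{1}{N+Mk}\,P(r,t,T)\,G(z,\tau)$, where $P(r,t,T)=e^{f_1(\tau)-r\tau}$ is the bond price \eqref{eq:18} (so that $P_r=-\tau P$ and $P_{rr}=\tau^2P$) and $z=\ln\frac{kV}{NX}-\ln P(r,t,T)$ is the forward log-moneyness. Substituting this ansatz into \eqref{eq:22}, using that $P$ itself solves the bond equation \eqref{eq:11} and that $f_1'(\tau)=\widetilde\sigma_r^2(t)\,\tau^2-\mu_r\frac{t^{\alpha-1}}{\Gamma(\alpha)}\tau$ by \eqref{eq:16} (with $f_2(\tau)=\tau$), one finds that every term in which $r$ appears explicitly, the whole zeroth-order coefficient, and a large part of the first-order coefficient cancel. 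Since $\widetilde\sigma_v^2(t)$, $\widetilde\sigma_r^2(t)$ and $\rho\widetilde\sigma_r(t)\widetilde\sigma_v(t)$ all share the common factor $Ht^{2H-1}(t^{\alpha-1}/\Gamma(\alpha))^{2H}$, the three second-order terms of \eqref{eq:22} collapse into a single one, and what is left is $G_\tau=\kappa(t)\big(G_{zz}-G_z\big)$ with $t=T-\tau$, $\kappa(t)=Ht^{2H-1}(t^{\alpha-1}/\Gamma(\alpha))^{2H}\,\widehat\sigma^2(t)=\frac{H}{(\Gamma(\alpha))^{2H}}\,t^{2\alpha H-1}\,\widehat\sigma^2(t)$, $\widehat\sigma^2$ as in \eqref{eq:37}, and terminal datum $G(z,0)=NX(e^{z}-1)^+$.

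The second step is to \emph{remove the drift and rescale time}. The substitution $G=e^{z/2}\,e^{-\frac{1}{4}\int_0^\tau\kappa(T-u)\,du}\,\widetilde G$ eliminates the first-order term (only the exponent $z/2$ works, precisely because the coefficient of $G_z$ is exactly $-\kappa(t)$), leaving $\widetilde G_\tau=\kappa(T-\tau)\,\widetilde G_{zz}$; the deterministic time change $s=s(t)=\int_0^{T-t}\kappa(T-u)\,du=\frac{H}{(\Gamma(\alpha))^{2H}}\int_t^T\widehat\sigma^2(v)\,v^{2\alpha H-1}\,dv$ then brings it to the standard heat equation, whose kernel has variance $2s$. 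Convolving this kernel with the transformed payoff $e^{-z/2}NX(e^{z}-1)^+$ and completing the square in the two resulting Gaussian integrals gives $\widetilde G=NX\big(e^{z/2+s/4}\phi(\tfrac{z+s}{\sqrt{2s}})-e^{-z/2+s/4}\phi(\tfrac{z-s}{\sqrt{2s}})\big)$, hence $G=NX\big(e^{z}\phi(\tfrac{z+s}{\sqrt{2s}})-\phi(\tfrac{z-s}{\sqrt{2s}})\big)$. Undoing the deflation, $W=\frac{1}{N+Mk}P(r,t,T)\,G$, and using $NX\,P(r,t,T)\,e^{z}=kV_t$ (since $e^{z}=\frac{kV_t}{NX\,P(r,t,T)}$), I arrive at \eqref{eq:35} with $d_1=\frac{z+s}{\sqrt{2s}}$ and $d_2=d_1-\sqrt{2s}$; rewriting $z=\ln\frac{kV_t}{NX}-\ln P(r,t,T)$ and $2s=\frac{2H}{(\Gamma(\alpha))^{2H}}\int_t^T\widehat\sigma^2(v)v^{2\alpha H-1}\,dv$ gives precisely \eqref{eq:36}--\eqref{eq:37}.

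The main obstacle is the bookkeeping inside the first step. One has to check that the terms generated by $P_t$, $P_r$, $P_{rr}$ through the deflation ansatz, combined with $\mu_r\frac{t^{\alpha-1}}{\Gamma(\alpha)}W_r$, $rV\,W_V$ and $-rW$ from \eqref{eq:22}, cancel \emph{all} explicit $r$-dependence together with the zeroth-order term --- this works exactly because $P$ solves \eqref{eq:11} with $f_2(\tau)=\tau$ and $f_1$ as produced by \eqref{eq:16} --- and, further, that after these cancellations the first-order coefficient reduces to the single expression $-\kappa(t)$, so that the elementary exponential factor $e^{z/2}$ is enough to reach a pure heat equation. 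Once this reduction has been verified, the time change and the Gaussian integration are entirely routine.
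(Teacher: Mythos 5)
Your proposal is correct and follows essentially the same route as the paper: you deflate by the zero-coupon bond so that the bond equation \eqref{eq:11} cancels all explicit $r$-dependence and reduce \eqref{eq:22} to a one-dimensional parabolic problem with time-dependent coefficient $\overline{\sigma}^2(t)=\frac{H}{(\Gamma(\alpha))^{2H}}t^{2\alpha H-1}\widehat{\sigma}^2(t)$, the only difference being that you work in forward log-moneyness and carry out the drift removal, time change and heat-kernel integration explicitly, where the paper uses the ratio variable $z=V/P$ and simply quotes the solution of $\Theta_t+\overline{\sigma}^2(t)z^2\Theta_{zz}=0$. One remark: exactly as in the paper's own computation, your derivation produces $NX\,P(r,t,T)\phi(d_2)$ for the second term, without the extra factor $e^{-r(T-t)}$ that appears in the stated formula, so that discrepancy is an inconsistency in the paper's statement rather than a gap in your argument.
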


\begin{proof}

Consider the partial differential equation (\ref{eq:22}) of the equity warrants with boundary condition $W_T=\frac{1}{N+Mk}(kV_T-NX)^+$

\begin{eqnarray}
&&\frac{\partial W}{\partial t}+\widetilde{\sigma}_v^2(t)V^2\frac{\partial^2 W}{\partial V^2}+\widetilde{\sigma}_r^2(t)\frac{\partial^2 W}{\partial r^2}+2\rho\widetilde{\sigma}_r(t)\widetilde{\sigma}_v(t)\frac{\partial^2 W}{\partial V\partial r}\nonumber\\
&&+\mu_r\frac{t^{\alpha-1}}{\Gamma(\alpha)}\frac{\partial W}{\partial r}+rV\frac{\partial W}{\partial V}-rW=0,
\label{eq:38}
\end{eqnarray}
Denote
\begin{eqnarray}
z=\frac{V}{P(r, t, T)},\quad\Theta(z, t)=\frac{W(V, r, t)}{P(r, t, T)}.
\label{eq:39}
\end{eqnarray}
Then we get

\begin{eqnarray}
\frac{\partial W}{\partial t}&=&\Theta\frac{\partial P}{\partial t}+P\frac{\partial \Theta}{\partial t}-z\frac{\partial \Theta}{\partial z}\frac{\partial P}{\partial t},\nonumber\\
\frac{\partial W}{\partial r}&=&\Theta\frac{\partial P}{\partial r}-z\frac{\partial \Theta}{\partial z}\frac{\partial P}{\partial r},\nonumber\\
\frac{\partial W}{\partial V}&=&\frac{\partial \Theta}{\partial z},\label{eq:40}\\
\frac{\partial^2 W}{\partial r^2}&=&\Theta\frac{\partial^2 P}{\partial r^2}-z\frac{\partial \Theta}{\partial z}\frac{\partial^2 P}{\partial r^2}+\frac{z^2}{P}\frac{\partial^2\Theta}{\partial z^2}\left(\frac{\partial P}{\partial r}\right)^2,\nonumber\\
\frac{\partial^2 W}{\partial r\partial V}&=&-\frac{z}{P}\frac{\partial^2 \Theta}{\partial z^2}\frac{\partial P}{\partial r},\nonumber\\
\frac{\partial^2 W}{\partial V^2}&=&\frac{1}{P}\frac{\partial^2 \Theta}{\partial z^2}.\nonumber
\end{eqnarray}
Inserting Equation (\ref{eq:40}) into Equation (\ref{eq:38})
\begin{eqnarray}
\frac{\partial \Theta}{\partial t}&+&\frac{\partial^2 \Theta}{\partial z^2}\left[\widetilde{\sigma}_v^2(t)\frac{V^2}{P^2}+2\rho z^2 \widetilde{\sigma}_r(t)\widetilde{\sigma}_v(t)\frac{1}{P}\frac{\partial P}{\partial r}+\widetilde{\sigma}_r^2(t)z^2\left(\frac{1}{P}\frac{\partial P}{\partial r}\right)^2\right]\nonumber\\
&-&\frac{z}{P}\left[\frac{\partial P}{\partial t}+\widetilde{\sigma}_r^2(t)\frac{\partial^2 P}{\partial r^2}+\mu_r\frac{t^{\alpha-1}}{\Gamma(\alpha)}\frac{\partial P}{\partial r}-r\frac{V}{z}\right]\nonumber\\
&+&\frac{\Theta}{P}\left[\frac{\partial P}{\partial t}+\widetilde{\sigma}_r^2(t)\frac{\partial^2 P}{\partial r^2}+\mu_r\frac{t^{\alpha-1}}{\Gamma(\alpha)}\frac{\partial P}{\partial r}-rP\right]=0.
\label{eq:41}
\end{eqnarray}
From Equation (\ref{eq:11}), we can obtain
\begin{eqnarray}
\frac{\partial \Theta}{\partial t}+\overline{\sigma}^2(t)z^2\frac{\partial^2 \Theta}{\partial z^2}=0,
\label{eq:42}
\end{eqnarray}
with boundary condition $\Theta(z,T)=(z-K)^+$,

where
\begin{eqnarray}
\overline{\sigma}^2(t)=\widetilde{\sigma}_v^2(t)+2\rho\widetilde{\sigma}_r(t)\widetilde{\sigma}_v(t)(T-t)+\widetilde{\sigma}_r(t)^2(T-t)^2.
\label{eq:43}
\end{eqnarray}

The solution of partial differential Equation (\ref{eq:42}) with boundary condition $\Theta(z,T)=\frac{1}{N+Mk}(kz-NX)^+$, is given by

\begin{eqnarray}
\Theta(z,t)=kz\phi(\widehat{d}_1)-NX\phi(\widehat{d}_2),
\label{eq:44}
\end{eqnarray}
here
\begin{eqnarray}
\widehat{d}_1&=&\frac{\ln\frac{kz}{NX}+\int_t^T\overline{\sigma}^2(v)ds}{\sqrt{2\int_t^T\widehat{\sigma}^2(v)dv}},\\
\widehat{d}_2&=&\widehat{d}_1-{\sqrt{2\int_t^T\overline{\sigma}^2(v)dv}}.
\label{eq:45}
\end{eqnarray}
Thus, from Equations (\ref{eq:39}) and (\ref{eq:44})--(\ref{eq:45}) we obtain
\begin{eqnarray}
W(V, r, t)&=&\frac{1}{N+Mk}\left(kV_t\phi(d_1)-NXe^{-r(T-t)}P(r, t, T)\phi(d_2)\right)
\label{eq:46}
\end{eqnarray}
where
\begin{eqnarray}
d_1&=&\frac{\ln\frac{kV_t}{NX}-\ln P(r, t, T)+\frac{H}{\left(\Gamma(\alpha)\right)^{2H}}\int_t^T\widehat{\sigma}^2(v)v^{2\alpha H-1}dv}{\sqrt{\frac{2H}{\left(\Gamma(\alpha)\right)^{2H}}\int_t^T\widehat{\sigma}^2(v)v^{2\alpha H-1}dv}},\\
d_2&=&d_1-\sqrt{\frac{2H}{\left(\Gamma(\alpha)\right)^{2H}}\int_t^T\widehat{\sigma}^2(v)v^{2\alpha H-1}dv}.
\label{eq:47}
\end{eqnarray}
\end{proof}

\begin{figure}[H]
  \centering
          \includegraphics[width=1\textwidth]{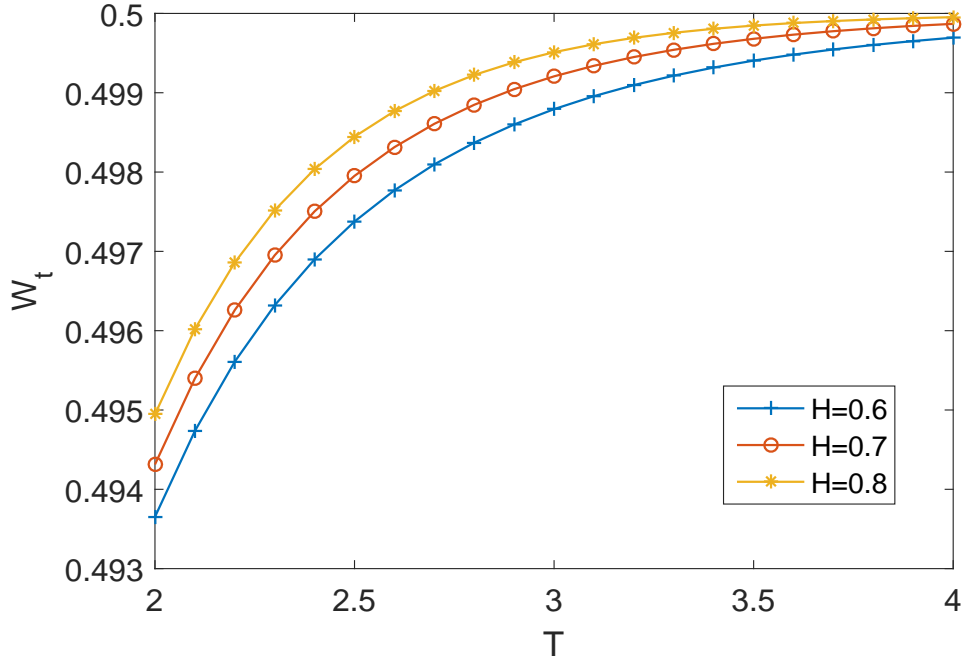}
  \caption{Value of equity warrant as a function of maturity time $T$ for different values of $H$, see formula (\ref{eq:35}). The parameters are: $\mu_v=\sigma_v=\mu_r=\sigma_r=r(0)=V(0)=N=M=k=X=1, \alpha = 0.9, \rho=0.5, t=0.$
}
\label{fig4}
\end{figure}

\section*{Acknowledgements}
The research of MM was partially supported by NCN SONATA BIS-9 grant number 2019/34/E/ST1/00360.

\bibliographystyle{siam}
\bibliography{reference}

\end{document}